\documentclass[12pt]{article}

\usepackage{amsmath} 
\usepackage{amsfonts} 
\usepackage{microtype} 
\usepackage{hyperref} 
\usepackage{amsthm} 
\usepackage[shortlabels]{enumitem} 
\usepackage{amssymb} 
\usepackage{comment}
\usepackage{tikz} 
\usepackage{subcaption}
\usepackage{authblk}
\usepackage{geometry}
\geometry{margin=2.2cm}

\newtheorem{thm}{Theorem}
\numberwithin{equation}{section}

\newtheorem{defn}[thm]{Definition}

\newtheorem{cor}[thm]{Corollary}
\newtheorem{prop}[thm]{Proposition}
\newtheorem{lemma}[thm]{Lemma}

\newcommand{\col}{\text{col}}

\newcommand{\phase}{\text{phase}}

\newcommand{\RR}{\mathbb{R} }

\newcommand{\II}{I}
\newcommand{\CC}{\mathbb{C}}
\newcommand{\MM}{\mathcal{M}}

\newcommand{\A}{\mathcal{A}}
\newcommand{\B}{\mathcal{B}}

\newcommand{\fR}{f_{\text{R}}}

\begin{document}
\title{A note on Douglas-Rachford, gradients, and phase retrieval}
\author[1]{Eitan Levin}
\author[2]{Tamir Bendory}
\affil[1]{The Program in Applied and Computational Mathematics, Princeton University, Princeton, NJ, USA}
\affil[2]{School of Electrical Engineering, Tel Aviv University, Tel Aviv, Israel}
\maketitle

\begin{abstract}
The properties of gradient techniques for the phase retrieval problem have received a considerable attention in recent years.  
In almost all applications, however, the phase retrieval problem is solved using a family of algorithms that can be interpreted as variants of Douglas-Rachford splitting. 
In this work, we establish a connection between Douglas-Rachford and gradient algorithms.
Specifically, we show that in some cases a generalization of Douglas-Rachford, called relaxed-reflect-reflect (RRR), can be viewed as gradient descent on a certain objective function.
The solutions coincide with the critical points of that objective, which---in contrast to standard gradient techniques---are not its minimizers.
Using the objective function, we give simple proofs of some basic properties of the RRR algorithm. Specifically, we describe its set of solutions, show a local convexity around any solution, and derive stability guarantees. Nevertheless, in its present state, the analysis does not elucidate the remarkable empirical performance of RRR and its global properties. 
\end{abstract}

\section{Introduction}

For a given sensing matrix  $A\in\CC^{m\times n}$ and magnitudes $b\in\RR^{m}_{\geq 0}$, the goal of the phase retrieval problem is solving the system of $m$ equations
\begin{equation} \label{eq:pr}
|Ax_0| = b,
\end{equation}
where the absolute value is taken entry-wise.
The matrix $A$ usually represents a Fourier-type transform, such as the DFT matrix $(m=n)$ or the over-sampled DFT matrix $(m> n)$.
Phase retrieval can be formulated conveniently as a feasibility problem:   finding   a point $x\in \A\cap \B$, where $\B$ is the set of all signals that satisfy~\eqref{eq:pr}, namely, 
\begin{equation} \label{eq:setB}
\mathcal{B}=\{y\in\CC^m:\ |y|=b\},
\end{equation}
and the set $\A$ encodes application-specific additional knowledge about the solution, such as sparsity or known support.
The phase retrieval problem and some of its applications are discussed in Section~\ref{sec:phase_retrieval}.

In the last decade, the computational and theoretical aspects of the phase retrieval problem have received much attention. 
To facilitate  the mathematical analysis, it became fashionable to investigate a toy model---one that does not appear in applications---where the entries of $A$ are drawn i.i.d.\ from a   normal distribution (or similar statistical models); 
hereafter we refer to the problem of recovering a signal from such measurements as the \emph{random phase retrieval problem}. 
The most  popular algorithms for this problem are based on minimizing different non-convex loss functions (e.g., non-convex least squares) using first-order gradient techniques; see for instance~\cite{candes2015phase_WF,chen2017solving,wang2017solving,cai2016optimal}.
Notably, this line of papers derived solid theoretical guarantees by showing that the non-convexity of the problem is usually benign when $m$ is sufficiently larger than $n$~\cite{sun2018geometric,Chen2019}.

Unfortunately, it is now clear that the  random phase retrieval  problem is considerably easier than the actual phase retrieval problem, when $A$ is a Fourier-type matrix.  
For most phase retrieval applications, the algorithms proposed for the random phase retrieval setup fail: the non-convexity is not benign and gradient-based algorithms are trapped in local minima, far from a global solution; see an elaborated discussion in~\cite{Elser2017}.
Consequently, this substantial body of literature have had  only a minor effect  on practical applications.
Instead, many heuristic algorithms are used in practice, including the hybrid input-output (HIO)~\cite{Fienup1982}, difference map~\cite{elser2003phase}, relaxed averaged alternating reflections (RAAR)~\cite{Luke2005}, and relaxed reflect reflect (RRR)~\cite{Elser2017a}.  
All these algorithms can be understood as generalizations of the Douglas-Rachford algorithm~\cite{douglas1956numerical}; see Section~\ref{sec:DR} for an introduction.
By a slight abuse of terminology, we shall refer to these techniques as \emph{Douglas-Rachford type} algorithms.  
These algorithms enjoy good empirical performance but their properties, when applied to the non-convex problem of phase retrieval, are generally not understood. 

In this work, we focus on the RRR algorithm as a representative example of the Douglas-Rachford type algorithms.
In Section~\ref{sec:DR_GD}, we show that in some cases, RRR can be viewed as gradient descent on a certain objective function, all of whose critical points are solutions. 
The intriguing objective function is very different from the  objective functions employed for the random phase retrieval problem.
In particular, the RRR solutions are not minimizers of the objective function.
We also show that in other cases, RRR is not a gradient descent for any objective function.  
Using the underlying objective function of RRR, we give simple proofs of a few basic theoretical results in Section~\ref{sec:analysis}. Specifically, we characterize the set of solutions, show a local convexity around any solution, and derive some stability guarantees. 

\section{The phase retrieval problem and applications} \label{sec:phase_retrieval}

The phase retrieval problem entails finding a signal in the intersection of two sets $x_0\in \A \cap\B$.
We therefore define projectors onto these sets; for the algorithms we consider to be practical, the projectors should be efficiently computed.
For a general $x\in \CC^n$, let $y = Ax\in \CC^m$.  
We consider projectors in terms of $y$ rather than $x$ as the projector onto $\B$ is much cheaper to compute~\cite[Section~4.1]{Li2017a}. 
The projector of $y$ onto the set $\B$ is defined by 
\begin{equation*}
P_\B(y) = b\odot\phase(y),
\end{equation*}
where $b$ is the measured magnitudes~\eqref{eq:pr}, $\odot$ denotes the point-wise product, and the phase operator is defined element-wise as 
\begin{equation*}
\phase(y)[i]:=\frac{y[i]}{|y[i]|}, \qquad y[i]\neq 0,
\end{equation*}
and zero otherwise. The projector onto $\A$, denoted by $P_\A$, is application-specific; a few examples are provided below.

In what follows, a solution is defined as a point whose projections onto the two sets $\A$ and $\B$ are equal (so either projection is in $\A\cap\B$):
\begin{defn}
	\label{defn:sol} A point $y_{0}\in\CC^m$ is said to \emph{correspond to a solution} if $P_\A(y_0) = P_{\B}(y_0)$.
\end{defn}
\noindent We denote a signal that corresponds to a solution by $x_0$ so that $y_0=Ax_0$.
 Importantly, this work focuses on noiseless problems, when exact solutions exist. 
In practice, the data is always contaminated by noise and the definition of a solution should be modified accordingly. In addition, this work considers only discrete setups, and thus neglects  sampling implications. 

We now describe a few specific phase retrieval problem setups and algorithms. 
  In the random phase retrieval problem, the entries of the sensing matrix are usually drawn i.i.d.  from a  normal distribution with $m> 2n$.
A point  $y_{0}\in\CC^m$ that corresponds to a solution should be within the column space of the matrix $A$, that is, $y_0 = AA^\dagger y_0$, where $A^\dagger$ is the pseudo-inverse of $A$; thus,  the set $\A$  describes all signals that lie in the column space of $A$.  
Since this linear projector onto a subspace will be used successively throughout the paper, we denote it by $P_A$, rather than  $P_\A$  which is used for a general (not necessarily linear) projector. In particular, the projection of $y\in\CC^m$ onto the column space of $A$ is given by:
\begin{equation*}
P_A(y) = AA^\dagger y.
\end{equation*} 
It was shown that under rather mild conditions the intersection $\A\cap\B$ is a singleton up to an unavoidable global phase ambiguity: if $y_{0}\in\A\cap\B$ then also $e^{i\theta}y_{0}\in\A\cap\B$ for any global phase~$\theta$; see for instance~\cite{balan2006signal, Bandeira2014,eldar2014phase, Conca2015}.

Since the phase retrieval problem involves searching for an intersection of two sets, and applying each projection separately is cheap, it is natural to apply the two projectors successively; this scheme is called the alternating projections algorithm  and its iterations read: 
\begin{equation}\label{eq:GS}
y\mapsto P_\A P_{\B}(y).   
\end{equation}
In the phase retrieval literature, this technique is usually referred to as Grechberg-Saxton (GS)~\cite{gerchberg1972practical} or error reduction.
The GS algorithm works quite well for the random phase retrieval problem and enjoys supporting theory~\cite{netrapalli2013phase,pauwels2017fienup,waldspurger2018phase,zhang2019phase}, however, in more realistic setups it is known to quickly converge to suboptimal local minima. In practice, it is merely used to refine a solution~\cite{Elser2017, Marchesini2007}. 

A different approach is based on first-order gradient algorithms.
The underling idea is very simple: finding a signal $x\in\CC^n$ that best fits the observed data $b$, that is,
\begin{equation} \label{eq:ls}
\arg\min_{x\in\CC^n} \| b - |Ax|\|^2.
\end{equation}
To minimize~\eqref{eq:ls}, different gradient-based algorithms were applied, equipped with  guarantees on their sample and computational complexities; see for instance~\cite{chen2017solving,cai2016optimal,wang2017solving,Chen2019}.
This approach is flexible and can be combined with different regularizers (e.g., sparsity-promoting terms), and different optimization strategies.  
Gradient-based algorithms were also proposed for other phase retrieval applications in which there are more measurements than unknowns, such as ptychography\footnote{In practice, the sensing matrix in ptychography is not precisely known, making the problem even more challenging.} and frequency-resolved optical gating~\cite{yeh2015experimental,bian2015fourier,bendory2017non,xu2018accelerated,bostan2018accelerated,pinilla2019frequency}. 

We now turn our attention to phase retrieval problems that appear in applications.
In coherent diffraction imaging (CDI), an object is illuminated with a
coherent wave and the  diffraction intensity pattern (equivalent to the Fourier magnitudes of the signal) is measured; thus,  the sensing matrix $A$ is the DFT matrix. As an additional prior, usually  the support of the signal is assumed to be known, (i.e.,\ the signal is known  to be zero outside of some region)~\cite{shechtman2015phase,Bendory2017}. This condition is equivalent to replacing the DFT matrix ($m=n$) with an over-sampled Fourier matrix ($m>n$). Hence, the projector $P_A$ projects $y$ into the column space of the over-sampled DFT matrix. 
In dimension greater than one (as the problem appears in practice), if the over-sampling factor $m/n$ is at least two in each  dimension, then it is known that the solution is unique up to ambiguities~\cite[Corollary 2]{Bendory2017}. However, it was recently shown that this solution might be highly sensitive to perturbations and inexact support knowledge~\cite{barnett2018geometry}.   

In X-ray crystallography, the signal represents the atomic structure of the underlying object, for instance, a 3-D molecular structure. In that case, the signal is sparse, and its $k$ non-zero values correspond to atoms. The measured data is again equivalent to the Fourier  magnitudes of the signal.
Consequently, the sensing matrix $A$ is the DFT matrix, and the set $\A$ describes all signals for which the number of  non-zero values in the signal is at most $k$. The projection onto this set is simply given by keeping the entries corresponding to  the $k$ largest absolute values of the signal, and zeroing out all other entries. In particular, this projection $P_\A$ is not linear.
The solution for the crystallography problem is defined up to three intrinsic ambiguities: multiplication by a complex exponential, shift, and reflection through the origin.

For the last two applications above,  the alternating projection technique and gradient-based methods generally fail to produce meaningful solutions: they tend to quickly convergence to a suboptimal local minimum, far from a point that corresponds to a solution. 
Instead, a family of algorithms that can be described as generalizations of the Douglas-Rachford scheme are employed in practice. The following section introduces this framework and its variants. 

\section{Douglas-Rachford and its generalizations}  \label{sec:DR}
Suppose we wish to solve the minimization problem $\min_yF(y)$, where $F$ is convex. A point $y$ is a minimizer of a function $F(y)$ if and only if $0\in\partial F(y)$, where
\begin{equation}\label{eq:subgrad_defn} 
\partial F(y) = \{v: F(z)\geq F(y) + \langle v, z-y\rangle\ \text{for all } z\},
\end{equation}
is the subdifferential of $F$ at $y$. This is equivalent to requiring $y$ be a fixed point of the \emph{resolvent operator}  $\mathcal{R}_F:=(\lambda\partial F+\II)^{-1}$, for any scalar $\lambda$ \cite[Lemma~2]{Eckstein1992}. Note that even though $\lambda\partial F+\II$ is a set-valued operator (i.e.,\ it is multi-valued), its resolvent is single-valued when $F$ is convex; thus, $\mathcal{R}_F$ is a well-defined function~\cite[Corollary~2.2]{Eckstein1992}.
In fact, for convex $F$ the resolvent $\mathcal{R}_F(y)=\text{prox}_{\lambda F}(y)$ is the proximal mapping of $\lambda F$~\cite[Section~3.2]{boyd_prox_algs}, defined by $$\text{prox}_{\lambda F}(y):=\arg\min_{z\in\RR^n}\left\{\lambda F(z)+\frac{1}{2}\|z-y\|_2^2\right\}.$$  

Now suppose that $F(y)=f(y)+g(y)$ is a sum of two  functions.  This is the case for  phase retrieval since the feasibility problem of finding a point $y\in\mathcal{A}\cap\mathcal{B}$  can be written as $\min_y\mathbb{I}_{\mathcal{A}}(y)+\mathbb{I}_{\mathcal{B}}(y)$, where $\mathbb{I}_\MM$ is the indicator function~\cite{lindstrom2018survey}:
\begin{equation*}
\mathbb{I}_\MM (y)= \begin{cases}
0& \quad y\in \MM, \\  \infty& \quad y\notin \MM. 
\end{cases}
\end{equation*}
The indicator function $\mathbb{I}_\MM$ is convex if and only if $\MM$ is convex.
In many cases, computing $\mathcal{R}_f$ and $\mathcal{R}_g$ individually might be cheap, while computing $\mathcal{R}_{f+g}$ is expensive. For example, the resolvent of an indicator function $\mathcal{R}_{\mathbb{I}_\MM}$ is just the projection operator onto $\MM$ (for any~$\lambda$). Therefore, applying $\mathcal{R}_{\mathbb{I}_{\A}}$ and $\mathcal{R}_{\mathbb{I}_{\B}}$ amounts to projecting onto the sets $\A$ and $\B$, which can be done cheaply as in Section~\ref{sec:phase_retrieval}. On the other hand, applying $\mathcal{R}_{\mathbb{I}_{\A}+\mathbb{I}_{\B}}$ amounts to projecting onto $\A\cap\B$, which is equivalent to solving the phase retrieval problem. 
If $f$ and $g$ are convex functions, there is a simple way to formulate the problem of finding $y=\mathcal{R}_{f+g}(y)$ only in terms of the operators $\mathcal{R}_{f}$ and $\mathcal{R}_{g}$.
To this end, let us define the \emph{Cayley operator} associated with $\MM$ by  $\mathcal{C}_\MM:=2\mathcal{R}_\MM-\II$. Then, we have: 
\begin{prop}\label{prop:correspondence_convex}
Suppose $f$ and $g$ are convex functions. Then, $y = \mathcal{R}_{f+g}(y)$ if and only if $\mathcal{C}_f\mathcal{C}_g(z)=z$, where $y=\mathcal{R}_{g}(z)$. 
\end{prop}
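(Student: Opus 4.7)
The plan is to unfold both fixed-point conditions into subgradient inclusions at the single point $y$ and check they coincide. The key bookkeeping identity is that, by definition of the resolvent, $u = \mathcal{R}_h(v)$ is the same as $v - u \in \lambda\partial h(u)$.

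First I would treat the right-hand side. From $y = \mathcal{R}_g(z)$ we obtain $z - y \in \lambda\partial g(y)$, and in particular $\mathcal{C}_g(z) = 2y - z$. Then
\begin{equation*}
\mathcal{C}_f\mathcal{C}_g(z) = 2\mathcal{R}_f(2y - z) - (2y - z),
\end{equation*}
so the condition $\mathcal{C}_f\mathcal{C}_g(z) = z$ is equivalent to $\mathcal{R}_f(2y - z) = y$, which in turn unfolds to $y - z \in \lambda\partial f(y)$. Summing the two inclusions gives $0 \in \lambda\bigl(\partial f(y) + \partial g(y)\bigr)$, and the Moreau--Rockafellar sum rule (valid for convex $f,g$ under a mild qualification condition, which we assume throughout) rewrites this as $0 \in \lambda\partial(f+g)(y)$, i.e.\ $y = \mathcal{R}_{f+g}(y)$. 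This establishes the implication ``$\Leftarrow$''.

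For the converse, suppose $y = \mathcal{R}_{f+g}(y)$, which unfolds to $0 \in \partial f(y) + \partial g(y)$ by the same sum rule. Pick $v \in \partial f(y)$ with $-v \in \partial g(y)$ and set $z := y - \lambda v$. Then $z - y = -\lambda v \in \lambda\partial g(y)$, so $\mathcal{R}_g(z) = y$ as required, and $\mathcal{C}_g(z) = 2y - z = y + \lambda v$; also $(2y - z) - y = \lambda v \in \lambda\partial f(y)$ shows $\mathcal{R}_f(2y - z) = y$. Therefore $\mathcal{C}_f\mathcal{C}_g(z) = 2y - (2y - z) = z$, completing the equivalence.

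The only genuinely delicate point is the appeal to the subdifferential sum rule $\partial(f+g) = \partial f + \partial g$. For general convex $f,g$ this can fail, but it holds under standard qualifications, e.g.\ when the relative interiors of $\mathrm{dom}\,f$ and $\mathrm{dom}\,g$ intersect; in the indicator-function case relevant to phase retrieval this becomes a constraint qualification on $\A$ and $\B$, which I would flag explicitly. The rest is just mechanical unfolding of the resolvent and Cayley operators, so I do not anticipate a real obstacle beyond book-keeping.
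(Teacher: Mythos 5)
Your proof is correct and follows essentially the same route as the paper's: unfold both fixed-point conditions into the subgradient inclusions $z-y\in\lambda\partial g(y)$ and $y-z\in\lambda\partial f(y)$ and add them. The two places where you are more careful than the paper---invoking the Moreau--Rockafellar sum rule to pass from $0\in\partial(f+g)(y)$ to $0\in\partial f(y)+\partial g(y)$ in the converse, and constructing $z=y-\lambda v$ from a matched pair $v\in\partial f(y)$, $-v\in\partial g(y)$ rather than subtracting the inclusions for an arbitrary $z$ with $\mathcal{R}_g(z)=y$---are genuine improvements, since the paper's converse silently assumes both.
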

\begin{proof} 
Let us assume that $\mathcal{C}_f\mathcal{C}_g(z)=z$, which is equivalent to $\mathcal{R}_f\big(2\mathcal{R}_g(z)-z\big) = \mathcal{R}_g(z)$. Since we defined $y=\mathcal{R}_g(z)$, this can be rewritten as $\mathcal{R}_f(2y-z) = y$.
Now, $\mathcal{R}_f(2y-z)=y$ if and only if $2y-z\in y+\lambda\partial f(y)$, and $y=\mathcal{R}_g(z)$ if and only if $z\in y+\lambda\partial g(y)$. Adding these two properties together yields
\begin{equation} 
\mathcal{R}_f(2y-z) = y \implies 2y\in 2y+\lambda(\partial f(y)+\partial g(y)). 
\end{equation}
This is equivalent to $0\in \partial f(y)+\partial g(y)$, and thus $y=\mathcal{R}_{f+g}(y).$

Conversely, let us assume that $y=\mathcal{R}_{f+g}(y)$ and therefore $0\in \partial f(y)+\partial g(y)$. Then, since also $z\in y+\lambda\partial g(y)$ by definition of $y$, we can subtract the two to obtain $2y-z\in y+\lambda\partial f(y)$, hence $\mathcal{R}_f(2y-z)=y$ and $\mathcal{C}_f\mathcal{C}_g(z)=z$ is a fixed point.
\end{proof}

\noindent Proposition~\ref{prop:correspondence_convex} implies that in the convex case it suffices to find a fixed point for $\mathcal{C}_f\mathcal{C}_g$, which involves computing only $\mathcal{R}_f$ and $\mathcal{R}_g$. Naively, we may attempt to apply the fixed-point iterations $y\mapsto \mathcal{C}_f\mathcal{C}_g(y)$. Unfortunately, this is not guaranteed to converge even if both $f$ and $g$ are convex \cite[Section~4.1]{GS_counterexample}. Instead, the Douglas-Rachford algorithm iterates
\begin{equation}\label{eq:Doug_Rach}
y\mapsto \frac{1}{2}(\II+\mathcal{C}_{f}\mathcal{C}_{g})(y),
\end{equation} 
which is guaranteed to converge in the convex case whenever a solution $0\in\partial F(y)$ exists \cite[Section~3]{GS_counterexample}.

Generally, while the success of the Douglas-Rachford algorithm for closed, convex sets is well-understood, very little is known for the non-convex setting; see~\cite{lindstrom2018survey,li2016douglas} and references therein. 
	For instance, a local linear convergences  for non-convex sets was proven under several conditions~\cite{hesse2013nonconvex,phan2016linear}, however, it is not clear whether these conditions hold for phase retrieval.  
	In addition, the sequence generated by Douglas-Rachford is generally known to be bounded~\cite[Theorem 4]{li2016douglas}.
	Despite the lack of supporting theory, in practice the Douglas-Rachford type algorithms are known to solve   challenging non-convex problems, such as the Diophantine equations, bit retrieval, sudoku, and protein conformation determination~\cite{elser2007searching,borwein2017reflection,Elser2018}. 
	In addition, even for the random phase retrieval for which gradient-based algorithms were studied thoroughly, it was demonstrated numerically that Douglas-Rachford outperforms these gradient-based alternatives when the number of measurements drops close to the information-theoretic limit~\cite[Appendix~A]{Elser2017}. 

\subsection{Douglas-Rachford for phase retrieval}

As stated in the preceding section, the phase retrieval feasibility problem of finding $y\in\A\cap\B$ can be written as $\min_y\mathbb{I}_\A(y)+\mathbb{I}_\B(y)$ and the resolvents of these two indicator functions are simply the projections onto the two constraint sets $\mathcal{R}_{\mathbb{I}_\A}=P_\A$ and $\mathcal{R}_{\mathbb{I}_\B}=P_\B$. The corresponding Cayley operators are the \emph{reflections} across these sets: $\mathcal{C}_{\mathbb{I}_\A}:=R_\A=2P_\A - \II$ and similarly for $\B$.
Therefore, the Douglas-Rachford iterations for for phase retrieval read
\begin{equation}\label{eq:Doug_Rach1}
    y\mapsto \frac{1}{2}(I+R_\A R_{\B})(y) = y + P_\A\big(2P_\B(y)-y\big) - P_\B(y).
\end{equation} 
The $(t+1)^{th}$ iteration of the algorithm can  be parse as
\begin{align*}
y_1 &= P_\B (y^{(t)}) \\
y_2 &= P_\A (2y_1 - y^{(t)})\\
y^{(t+1)} &= y^{(t)} + (y_2-y_1) .
\end{align*}
This formulation unveils close relations with the method of alternating direction
method of multipliers (ADMM)~\cite{boyd2011distributed}; see for instance~\cite{Elser2017a}.

Unfortunately, the set $\B$~\eqref{eq:setB} is not convex  and thus Proposition~\ref{prop:correspondence_convex} and the derivation preceding it does not apply for phase retrieval. If it were true, Proposition~\ref{prop:correspondence_convex} would imply that $y$ is a fixed point of the iterations~\eqref{eq:Doug_Rach1} if and only if $P_\B(y)\in\A\cap\B$ is a solution. This is false. For a trivial counterexample, consider $y_0=Ax_0$ such that $y_0\neq 0$ for some $A$ and $x_0$. Let $y=y_0/2$. In this case, $P_\B(y)=y_0\in \A\cap \B$ is a solution, but $P_\A(y)=y_0/2\neq P_\B(y)$, thus $y$ does not correspond to a solution. 
Nevertheless, the Douglas-Rachford iterations are still well-defined. If $\A$ is a linear subspace, the following proposition shows that Douglas-Rachford stops only when a solution is found:
\begin{prop} \label{prop:correspondence}
	If $\A$ is a linear subspace, then $y$ is a fixed point of the iterations~\eqref{eq:Doug_Rach1} if and only if $y$ corresponds to solution. 
\end{prop}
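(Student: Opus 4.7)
The plan is to rewrite the fixed-point condition for~\eqref{eq:Doug_Rach1} in a form to which the linearity of $P_\A$ can be applied, and then argue the two directions separately. From~\eqref{eq:Doug_Rach1}, a point $y$ is a fixed point exactly when
\begin{equation*}
P_\A\bigl(2P_\B(y)-y\bigr) = P_\B(y).
\end{equation*}
Because $\A$ is a linear subspace, $P_\A$ is a linear map, so this condition becomes $2P_\A P_\B(y) - P_\A(y) = P_\B(y)$. I would set $p := P_\B(y)$ and $q := P_\A(y)$ for convenience; the fixed-point condition then reads
\begin{equation*}
2P_\A(p) = p + q.
\end{equation*}

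For the forward implication, suppose $y$ corresponds to a solution, i.e.\ $p = q$. Then the common value lies in $\A$ (since $q \in \A$), so $P_\A(p) = p$, and the identity $2p = p + q = 2p$ holds trivially. Hence $y$ is a fixed point.

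For the converse, I would exploit linearity of $P_\A$ to deduce that $p$ itself lies in $\A$. From $2P_\A(p) = p + q$ we get $p = 2P_\A(p) - q$; since $P_\A(p)\in\A$ and $q\in\A$, closure under linear combinations gives $p\in\A$. But then $P_\A(p) = p$, and substituting back into the fixed-point relation yields $2p = p + q$, i.e.\ $q = p$. This is precisely the statement $P_\A(y) = P_\B(y)$, so $y$ corresponds to a solution.

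The main (and really only) subtlety is the converse direction: one must recognize that the fixed-point identity, combined with the linearity of $P_\A$, forces $P_\B(y)\in\A$. This is exactly where the hypothesis that $\A$ is a linear subspace is used, and it is the step that fails for the trivial counterexample $y = y_0/2$ discussed just before the proposition, for a general (nonlinear) $P_\A$ the manipulation $p = 2P_\A(p) - q$ no longer places $p$ inside $\A$.
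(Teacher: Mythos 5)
Your proof is correct and follows essentially the same route as the paper's: both directions hinge on expanding $P_\A\bigl(2P_\B(y)-y\bigr)$ by linearity and, for the converse, observing that the fixed-point identity forces $P_\B(y)\in\A$ (your rearrangement $p=2P_\A(p)-q$ is just a repackaging of the paper's step of applying $I-P_\A$ to both sides). No gaps.
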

\begin{proof}
If the iterations stagnate, then $P_\A(2P_{\B}- I)(y) = P_{\B}(y)$. 
Applying $P_\A$ on both sides yields $P_\A(2P_{\B}- I)(y) = P_\A P_{\B}(y)$ and using the linearity of $P_\A$, we get $P_\A P_{\B}(y) = P_\A(y)$. Applying $(I-P_\A)$ yields $0 = (I-P_\A)P_{\B}(y)$ and thus $P_\A P_{\B}(y)=P_{\B}(y)$. Therefore, $P_\A(y)=P_\B(y)$.
Conversely, if~$y$ corresponds to a solution then by definition $P_\A(y)=P_\B(y)$ and thus $P_\A(y)=P_\A P_\B(y)$. Therefore,
\begin{equation*}
2P_\A P_\B(y) = 2 P_\A(y) = P_\A(y) + P_\B(y).
\end{equation*}
By the linearity of $P_\A$, we have then have $P_\A(2P_{\B}- I)(y) = 2P_\A P_\B(y)-P_\A(y)=P_\B(y)$ so $y$ is a fixed point.
\end{proof}
Note that if $y$ is a fixed point then $P_{\B}(y)\in\A\cap\B$ is a solution, but as the example before the proposition shows, the converse is false. Also, Proposition~\ref{prop:correspondence} does not guarantee that Douglas-Rachford actually converges to a solution, even when $\A$ is a linear subspace, and as far as we know this is still an open problem.

If $\A=\text{col}(A)$ is the subspace spanned by the columns of $A$, then it is straight-forward to show that the iterations~\eqref{eq:Doug_Rach1} can be rewritten as
\begin{equation}\label{eq:Doug_Rach2}
y\mapsto  P_A P_{\B}(y) + P_A^cP_{\B}^c(y),
\end{equation} 
where $P_\MM^c:=I-P_\MM$. 
This formulation offers an interesting interpretation of the Douglas-Rachford iterations. 
The first term is precisely the GS iterations~\eqref{eq:GS}, which tend to get trapped in irrelevant stagnation points.
The second term moves in the orthogonal complement of the column space, and its addition  guarantees that all the fixed points of the Douglas-Rachford scheme correspond to solutions.  

\subsection{Generalizations of Douglas-Rachford}

 Many algorithms proceed to relax these iterations by introducing different free parameters: 
  \begin{itemize}
    \item Fienup's hybrid input-output (HIO) algorithm proceeds by iterating 
       \begin{equation*}
    y\mapsto y + P_\A\big((1+\beta)P_\B(y)-y\big)-\beta P_\B(y).
    \end{equation*} 
    If $\A$ is linear, then it can be also written as
    \begin{equation}\label{eq:HIO}
        x\mapsto P_A P_{\B}(y) + P_A^c(I-\beta P_{\B})(y),
    \end{equation} 
    where $\beta$ is a parameter controlling the ``negative feedback.''
     \item The relaxed reflect reflect (RRR) algorithm iterates
    \begin{equation*}
        y\mapsto y + \beta\left(P_\A\big(2P_{\B}(y)-y\big)-P_\B(y)\right),
    \end{equation*}
    which, if $\A$ is linear, can be rewritten as
    \begin{equation}\label{eq:RRR} 
    y\mapsto y + \beta(2P_AP_\B(y)-P_A(y)-P_\B(y)) = (1-\beta)y + \beta\Big(P_AP_\B(y)+P_A^cP_\B^c(y)\Big).
    \end{equation}
    If $\beta\in(0,1)$, this is a convex combination of $y$ and the Douglas-Rachford iterate~\eqref{eq:Doug_Rach2}.
    \item The relaxed averaged alternating reflections (RAAR) algorithm iterates
    \begin{equation*}
        y\mapsto \beta\left(y + P_\A\big(2P_{\B}(y)-y\big)\right) + (1-2\beta)P_{\B}(y),
    \end{equation*}
    which if $\A$ is linear can be rewritten as
    \begin{equation}\label{eq:RAAR}
        y\mapsto \beta\left(y + 2P_\A P_{\B}(y)-P_\A(y)-P_{\B}(y)\right) + (1-\beta)P_{\B}(y),
    \end{equation}
    and interpreted as another convex combination if $\beta\in(0,1)$.
\end{itemize}
Clearly, if $\beta=1$ then all these algorithms coincide with the Douglas-Rachford scheme. Many other variants exist in the literature; see for instance~\cite{lindstrom2018survey} and references therein. 
For any $\beta>0$, when $\A$ is linear the RRR and HIO iterations 
stall only when a solution is obtained:
\begin{cor}
	\label{lem:any_fixed[i]s_good} When $\A$ is a linear subspace, $y$ is a fixed point of RRR or HIO if and only if $y$ corresponds to solution.
\end{cor}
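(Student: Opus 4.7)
The plan is to mimic the argument of Proposition~\ref{prop:correspondence}, exploiting the same two tools: the linearity and idempotence of $P_A$, and the fact that for both RRR and HIO the fixed-point equation collapses to a linear relation among $y$, $P_A(y)$, $P_\B(y)$, and $P_AP_\B(y)$. Throughout I will assume $\beta>0$ as in the statement, since only then can we cancel $\beta$ from the stationarity condition.

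For RRR, I would start from the iteration~\eqref{eq:RRR} and observe that $y$ is a fixed point iff
\begin{equation*}
2P_AP_\B(y) = P_A(y) + P_\B(y).
\end{equation*}
Applying $P_A$ to both sides and using $P_A^2=P_A$ gives $2P_AP_\B(y) = P_A(y)+P_AP_\B(y)$, i.e.\ $P_AP_\B(y) = P_A(y)$. Substituting this back into the previous display yields $P_A(y)=P_\B(y)$, so $y$ corresponds to a solution. Conversely, if $P_A(y)=P_\B(y)$ then $P_\B(y)\in\A$, hence $P_AP_\B(y)=P_\B(y)=P_A(y)$, and the RRR increment $\beta(2P_AP_\B(y)-P_A(y)-P_\B(y))$ vanishes.

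For HIO, I would expand the fixed-point equation obtained from~\eqref{eq:HIO},
\begin{equation*}
y = P_AP_\B(y)+P_A^c(I-\beta P_\B)(y),
\end{equation*}
to $(1+\beta)P_AP_\B(y) = P_A(y) + \beta P_\B(y)$ by writing $P_A^c=I-P_A$ and cancelling $y$. Applying $P_A$ as above gives $P_AP_\B(y)=P_A(y)$, and substituting back yields $\beta P_A(y)=\beta P_\B(y)$; since $\beta>0$, this is the solution condition. The converse again follows from $P_\B(y)\in\A$, which makes $P_AP_\B(y)=P_\B(y)=P_A(y)$ so that both sides of $(1+\beta)P_AP_\B(y) = P_A(y) + \beta P_\B(y)$ equal $(1+\beta)P_A(y)$.

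There is no real obstacle here: the argument is purely algebraic and mirrors the proof of Proposition~\ref{prop:correspondence}. The only subtlety worth stating explicitly is that linearity of $P_A$ is used twice (to commute it with subtraction and to apply $P_A^2=P_A$), and that $\beta>0$ is needed exactly once, to cancel $\beta$ in the HIO argument (and implicitly to ensure the RRR increment's coefficient is nonzero).
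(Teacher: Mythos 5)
Your proof is correct and follows the same route the paper intends: the paper's proof of this corollary is just the one-line remark that it ``follows from the proof of Proposition~\ref{prop:correspondence},'' and your argument is precisely that proof adapted to the RRR and HIO fixed-point equations (apply $P_A$, use linearity and idempotence, substitute back). You have simply written out explicitly the details the paper leaves implicit, including the correct observation that $\beta>0$ is where the relaxation parameter is cancelled.
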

\begin{proof}
	The proof follows from the proof of Proposition~\ref{prop:correspondence}.
\end{proof}


\section{RRR as a gradient algorithm}
\label{sec:DR_GD}

Let us consider the following objective function
\begin{equation} \label{eq:fr}
\fR(y) = \|y-P_\A P_\B(y)\|_2^2 - \frac{1}{2}\left(\|y-P_\A(y)\|_2^2 + \|y-P_\B(y)\|_2^2\right).
\end{equation}
Assuming $y$ and $A$ are real, and $P_A$ is a linear  projection onto the column space of $A$, 
the gradient of $\fR$ at any point $y$ none of whose coordinates are zero reads:
\begin{equation} 
\nabla \fR(y) = P_\A(y) + P_\B(y) - 2P_\A P_\B(y). 
\end{equation}
That is because $\nabla P_{\mathcal B}(y) = 0$ at any point $y$ such that $y_i\neq 0$ for all $i$, which follows because the sign function is locally constant.
Therefore, as long as none of the iterates of RRR~\eqref{eq:RRR} have a zero coordinate, RRR can be viewed as gradient descent on $f_R$, whose iterations are: 
\begin{equation} \label{eq:grad_flow}
y_{t+1} = y_t - \beta \nabla \fR(y_t).  
\end{equation}

Note that GS~\eqref{eq:GS} is also a gradient algorithm, with a constant step size, when the underlying objective function is 
\begin{equation}
f_{\text{GS}}(y) = \frac{1}{2}\|y-P_\A P_\B(y)\|_2^2.
\end{equation}
Therefore, the RRR iterations balance between two opposite forces:  RRR tries to minimize  $f_{\text{GS}}$, while at the same time it aims to maximize the $\ell_2$ distance of $y$ from the sets $\A$ and $\B$. 
A similar observation was made by Marchesini, who formulated HIO~\eqref{eq:HIO} as an instance of  saddle-point optimization~\cite{Marchesini2007}.
Nevertheless, searching for a saddle-point might by an unstable process, whereas our formulation allows us to derive some stability guarantees. For example, in Proposition~\ref{lem:local_convex} we show that $\fR$ is strongly convex in a small region around a solution.


While~\eqref{eq:grad_flow}  establishes an interesting connection between RRR and  the gradient-based algorithms solving~\eqref{eq:ls}, there is a notable difference between the two approaches. 
In gradient-based algorithms, we usually aim to minimize an objective function by setting its gradient to zero.
For RRR, we also wish to find a zero of the gradient and the objective, however, crucially, the solution is not a minimizer of the objective function $\fR$:  $\fR(y)<0$ for any suboptimal fixed point of GS, since such a point satisfies $y=P_\A P_\B(y)$ but  $y\neq P_\A(y)$ or $y\neq P_\B(y)$; this is an unusual scenario from optimization point-of-view.
 Moreover, attempting to run gradient descent on $\fR$ with standard optimization techniques, e.g., backtracking linesearch, will result in the chosen step sizes to rapidly go to zero, so the algorithm is effectively stuck at suboptimal points that are not even critical points. Therefore, in practice the RRR algorithm is run with a constant step size. 
Figure~\ref{fig:example} shows an example of signal recovery from the random phase retrieval setup. The objective function oscillates and drops below zero many times, until at some point it convergences quickly to the solution.

\begin{figure}[ht]
	\begin{subfigure}[h]{0.45\textwidth}
		\centering
		\includegraphics[scale=.6]{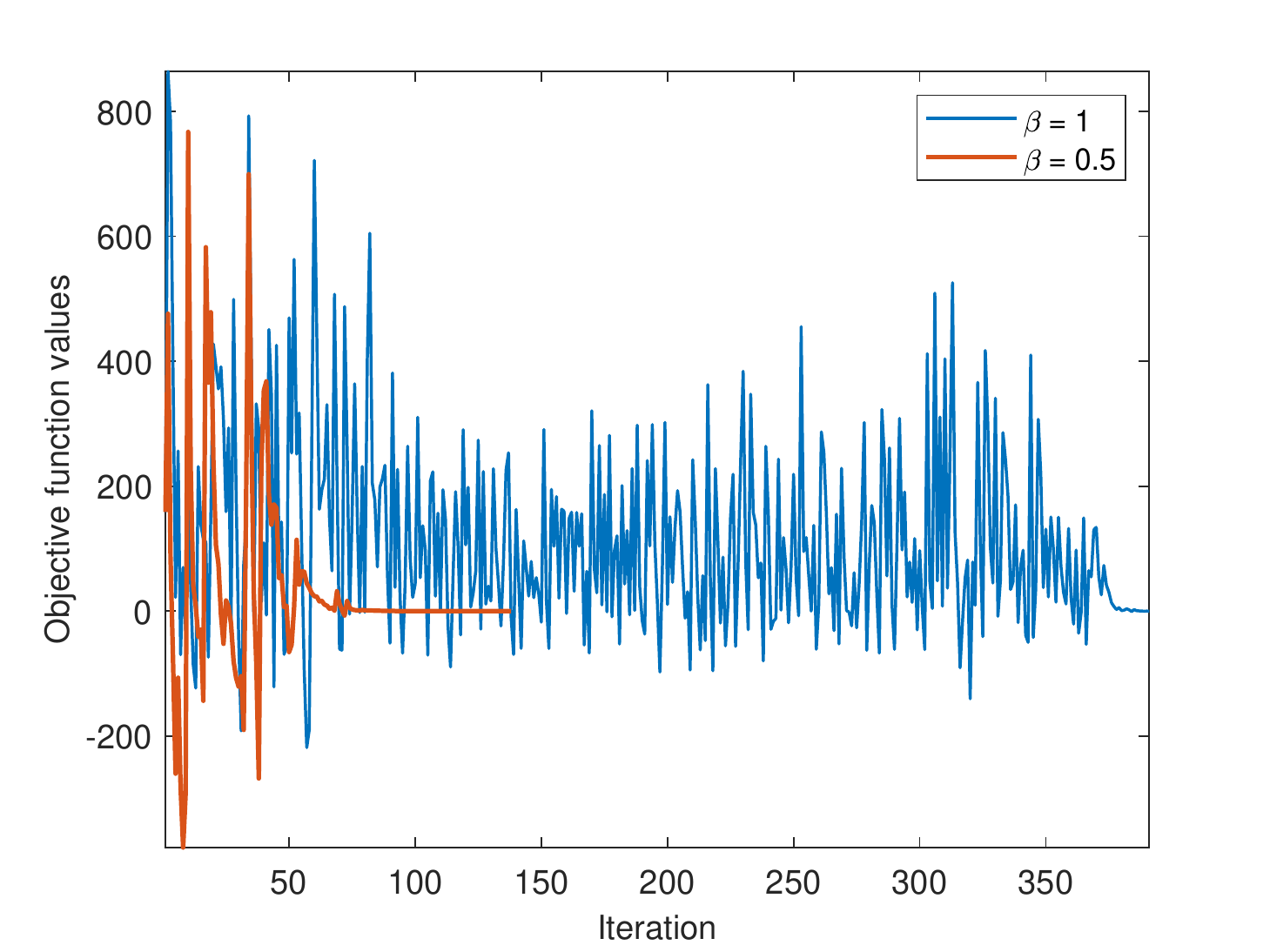}
		\caption{\label{fig:sr}}
	\end{subfigure}
	\hfill
	\begin{subfigure}[h]{0.45\textwidth}
		\centering
		\includegraphics[scale=.6]{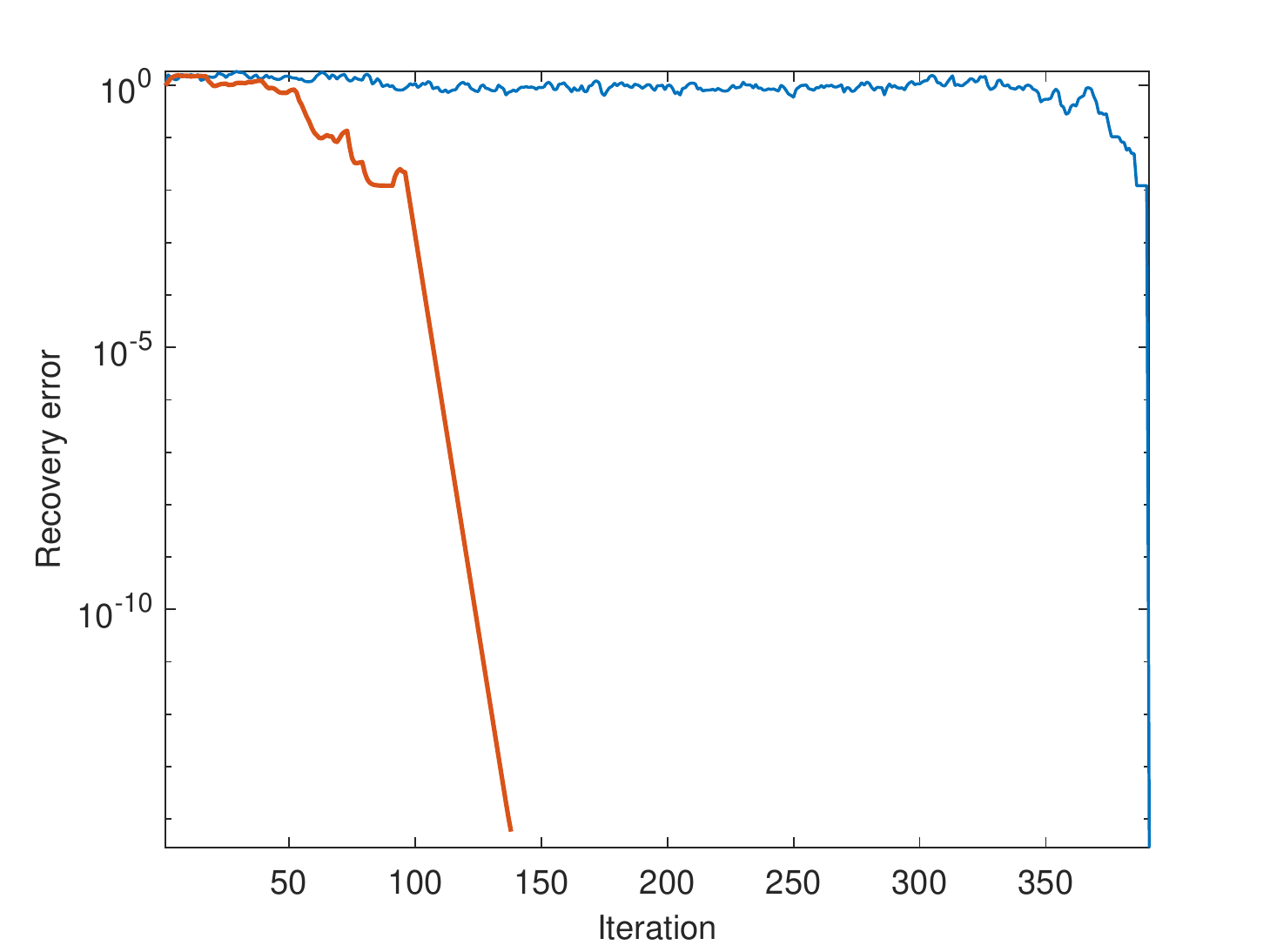}
		\caption{}
	\end{subfigure}
	\caption{\label{fig:example} Signal recovery from the random phase retrieval problem with RRR. A signal of length $n=50$  and a sensing matrix $A\in\RR^{80\times 50}$ were drawn from an i.i.d.\ normal distribution.  The signal was recovered with the RRR algorithm~\eqref{eq:RRR} with $\beta=0.5$ (in red) and $\beta=1$ (in blue); the latter coincides with the Douglas-Rachford scheme~\eqref{eq:Doug_Rach}. \textbf{Figure (a)} shows the non-monotonic progress of  the objective function~\eqref{eq:fr} with the iterations;  the values drop below zero multiple times. \textbf{Figure (b)} presents the recovery error with respect to the true signal when taking the sign ambiguity into account. This measure is not available in practice.}  
\end{figure}

Unfortunately, the analysis presented in this section is restricted to the case where $A$ and $x_0$ are real. This is a major drawback since in practice the matrix $A$ in phase retrieval applications is  complex. The following result shows that in the complex case, RRR is a not gradient descent for any objective function.
\begin{prop} Suppose that $P_A$ is a linear projection onto the column space of $A$. Then, if $y$ is a complex variable and $y[i]\neq 0$ for all $i$, then RRR is not gradient descent for any objective function.
\end{prop}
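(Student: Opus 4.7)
The plan is to show that the Wirtinger antiholomorphic derivatives of the RRR update fail the Schwarz symmetry that the real gradient of any real-valued function must satisfy, and hence no objective function can exist.

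I would first write the RRR iteration as $y\mapsto y - \beta h(y)$ with $h(y) = P_A(y) + P_B(y) - 2P_AP_B(y)$. For $h$ to equal the real gradient of some $F:\CC^m\to\RR$, Wirtinger calculus identifies the real gradient, viewed as a complex vector, with $2\,\partial F/\partial \bar y$; in particular Schwarz's equality of mixed partials forces
$$\frac{\partial h_k}{\partial \bar y_l} \;=\; 2\,\frac{\partial^2 F}{\partial \bar y_l\,\partial \bar y_k} \;=\; \frac{\partial h_l}{\partial \bar y_k} \qquad \text{for all } k,l.$$
I would then compute these antiholomorphic partials directly. Because $P_A$ is $\CC$-linear, $\partial(P_A y)_k/\partial \bar y_l = 0$. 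From $(P_B y)_k = b_k y_k/|y_k|$ and $|y_k|^2=y_k\bar y_k$ a short calculation gives $\partial(P_B y)_k/\partial \bar y_l = -\delta_{kl}\, b_k y_k^2/(2|y_k|^3)$. Using $\CC$-linearity of $P_A$ once more in the composition yields
$$\frac{\partial h_k}{\partial \bar y_l} \;=\; -\delta_{kl}\,\frac{b_k y_k^2}{2|y_k|^3} \;+\; (P_A)_{kl}\,\frac{b_l y_l^2}{|y_l|^3}.$$

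For $k\neq l$, the Schwarz condition together with Hermiticity of $P_A$ (so that $(P_A)_{lk}=\overline{(P_A)_{kl}}$) reduces to the requirement
$$(P_A)_{kl}\,\frac{b_l\, y_l^2}{|y_l|^3} \;=\; \overline{(P_A)_{kl}}\,\frac{b_k\, y_k^2}{|y_k|^3}.$$
Since $y_k^2/|y_k|^3 = e^{2i\arg y_k}/|y_k|$ has modulus and phase that can be varied freely and independently of the analogous factor in $y_l$, whereas the two sides of this equation differ by replacing a fixed complex scalar by its conjugate, for any pair of indices with $(P_A)_{kl}\neq 0$ and $b_k,b_l>0$ one can exhibit $y$ with $y_i\neq 0$ for all $i$ at which the equality fails. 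Hence no real-valued $F$ with $\nabla F = h$ exists on the specified domain, and RRR cannot be realized as gradient descent on any objective.

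The main obstacle is identifying the correct curl-free condition: Schwarz symmetry in the $\bar y$-partials is exactly what characterizes real gradients of real-valued potentials in Wirtinger coordinates, and it is not the naive Jacobian symmetry one would write in purely real variables. A secondary subtlety is that the statement tacitly excludes the degenerate case where $P_A$ is coordinate-diagonal, since then every off-diagonal $(P_A)_{kl}$ vanishes, the obstruction collapses, and $h$ genuinely decouples into a per-coordinate real gradient; this case is irrelevant for any $A$ of practical interest (in particular for Fourier-type matrices), so the conclusion is not affected.
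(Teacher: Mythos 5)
Your proposal is correct and takes essentially the same route as the paper: both compute the antiholomorphic Wirtinger derivatives of the update direction and show that the Schwarz symmetry condition, which any (real) gradient must satisfy, fails for the off-diagonal mixed partials because of the composed term $P_AP_\B(y)$. Your write-up is in fact slightly more careful than the paper's, since you justify why the two mixed partials genuinely differ for some admissible $y$ and you flag the degenerate case where $AA^\dagger$ is coordinate-diagonal (for which the obstruction vanishes and the claim fails), a case the paper's one-line inequality silently assumes away.
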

\begin{proof}
As shown in \cite{Marchesini2007}, the operators  $P_A(y),P_{\B}(y)$ are indeed gradients. However, $P_AP_{\B}(y)$ is not a gradient.
To see that, we compare the mixed Wirtinger derivatives:
	\[\begin{aligned} \frac{\partial}{\partial \overline{y[k]}}P_AP_{\B}(y)[i] &= -\frac{1}{2}(AA^{\dagger})[i,k]|b[k]|\frac{y[k]}{|y[k]|\overline{y[k]}},\\
	\frac{\partial}{\partial \overline{y[i]}}P_AP_{\B}(y)[k] &= -\frac{1}{2}(AA^{\dagger})[k,i]|b[i]|\frac{y[i]}{|y[i]|\overline{y[i]}}, \end{aligned}\]
	so $\frac{\partial}{\partial \overline{y[k]}}P_AP_{\B}(y)[i]\neq \frac{\partial}{\partial \overline{y[i]}}P_AP_{\B}(y)[k]$. 
	If $P_AP_{\B}$ was a gradient, then the Hessian (of the underlying function) would have been a symmetric matrix; this is not the case here. 
		In the real case, the derivative of the sign function is zero (besides at the origin) and thus the equality. 
\end{proof}

\section{Analysis}
\label{sec:analysis}

We derive several basic results about RRR  by viewing it as gradient descent~\eqref{eq:grad_flow}. In what follows, we consider the case where both $A$ and $x_0$ are real, so $\phase(y)=\text{sign}(y)$ for all signals $y$, and let $P_\A$ be the projection onto the column space of $A$:
\begin{equation*}
P_\A(y) = P_A(y) = AA^\dagger y.
\end{equation*}
\noindent We denote the column space of $y$ by $\col(A)$ and its orthogonal complement by $\col(A)^{\perp}$; the projection onto $\col(A)^{\perp}$ is given by ($\II-P_A$).
 The $i$th entry of a signal $z$ is denoted by $z[i]$.
 All proofs are provided in Section~\ref{sec:proof}. 


\paragraph{Set of solutions.} 
 The first result characterizes the set of fixed points of RRR.

\begin{prop}
\label{lem:solns_char} 
A signal $y$ corresponds to a solution if and only if $y = \tilde y + w$, where $ \tilde y\in\col(A)\cap \B$ and $w\in\col(A)^{\perp}$, satisfying either:
\begin{enumerate}
    \item $w[i]=0$; or,
	\item $\text{sign}(w[i]) = \text{sign}(\tilde y[i])$; or, 
	\item $\text{sign}(w[i]) = -\text{sign}(\tilde y[i])$ and $|w[i]|<|(Ax_0)[i]|$,
\end{enumerate}
for each $1\leq i\leq m$.
\end{prop}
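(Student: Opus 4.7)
The plan is to unpack Definition~\ref{defn:sol} in the real setting via the orthogonal decomposition $y = \tilde y + w$ with $\tilde y = P_A(y) \in \col(A)$ and $w = (I-P_A)(y) \in \col(A)^\perp$, and then translate the sign condition produced by $P_\B$ into the three cases listed.

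For the forward direction, I would start from $P_A(y) = P_\B(y)$ and observe that the left-hand side equals $\tilde y$ by construction, while the right-hand side, in the real case, is $b \odot \sgn(y)$. The equality $\tilde y = b \odot \sgn(y)$ immediately forces $|\tilde y[i]| = b[i] = |(Ax_0)[i]|$, so $\tilde y \in \col(A)\cap\B$, and also $\sgn(\tilde y[i]) = \sgn(y[i])$ at every coordinate (assuming $b[i]\neq 0$, which is the generic situation; the $b[i]=0$ coordinates fall into case~1 automatically). Rewriting $y[i] = \tilde y[i] + w[i]$, the coordinate-wise identity $\sgn(\tilde y[i] + w[i]) = \sgn(\tilde y[i])$ is what must be analyzed. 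A short case split on the sign of $w[i]$ gives exactly the three alternatives: either $w[i] = 0$ (case~1); or $w[i]$ has the same sign as $\tilde y[i]$ (case~2), in which case the sign of the sum is automatic; or $w[i]$ has the opposite sign, in which case the sign of $\tilde y[i] + w[i]$ agrees with $\sgn(\tilde y[i])$ if and only if $|w[i]| < |\tilde y[i]| = |(Ax_0)[i]|$ (case~3).

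For the converse, I would take $y = \tilde y + w$ with $\tilde y \in \col(A)\cap\B$, $w \in \col(A)^\perp$, and the per-coordinate conditions, and check that $\sgn(y[i]) = \sgn(\tilde y[i])$ in each of the three cases by the same elementary reasoning. Since $|\tilde y| = b$, this yields $P_\B(y) = b \odot \sgn(y) = b \odot \sgn(\tilde y) = \tilde y = P_A(y)$, so $y$ corresponds to a solution by Definition~\ref{defn:sol}.

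The argument is essentially a bookkeeping exercise and no step looks genuinely hard; the only delicate point is to ensure that the three cases are mutually exhaustive and that coordinates with $b[i]=0$ (equivalently $\tilde y[i]=0$) are handled consistently — these force $y[i] = w[i]$ with $P_\B(y)[i]=0$, which is compatible with $P_A(y)[i] = \tilde y[i] = 0$ only when $w[i]=0$, fitting into case~1. I would mention this edge case briefly so the three-way dichotomy is airtight.
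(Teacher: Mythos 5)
Your proof is correct in substance and follows essentially the same route as the paper's: decompose $y=P_A(y)+P_A^c(y)$, reduce the solution condition $P_A(y)=P_\B(y)$ to the coordinate-wise identity $\sgn(\tilde y[i]+w[i])=\sgn(\tilde y[i])$, and split on the sign of $w[i]$. One correction to your closing remark, though: at a coordinate with $b[i]=0$ the condition $P_A(y)[i]=P_\B(y)[i]$ reads $0=0$ and holds for \emph{every} value of $w[i]$, not only $w[i]=0$, so such coordinates are not ``forced into case~1''; rather, the ``only if'' direction of the proposition implicitly requires $b>0$ entrywise (an assumption the paper invokes elsewhere, e.g.\ $\min_i|(Ax_0)[i]|>0$).
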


\paragraph{Local convexity.} 
In~\cite{Li2017a}, it was proven that   if $A\in\CC^{m\times n}$ with $m/n\geq 2$ is isometric, $\beta\in(0,1]$, and  $y$ is sufficiently close to $y_{0}\in\col(A)\cap\B$, then RAAR~\eqref{eq:RAAR} converges linearly to $y_{0}$. 
In the real case, we prove a stronger result. In particular, the following proposition shows that every fixed point is a local minimum of $\fR(y)$ around which $\fR(y)$ is  convex, making our formulation more stable than the saddle-point formulation in~\cite{Marchesini2007}.   
\begin{prop}
\label{lem:local_convex} 
Suppose that $y_{0}$ corresponds to a solution and that $d = \min_i|y_0[i]| > 0$. Then:
\begin{enumerate}
	\item $\fR$ is convex in the $\ell_2$ ball of radius $d$ about $y_{0}$, and 1-strongly convex\footnote{Recall that a function $f$ is $p$-strongly convex if, for all $u,v$ in its domain, the following inequality holds $\left(\nabla f(u) -\nabla f(v)\right)^T(u-v)\geq p\|u-v\|^2_2$.} when restricted to the intersection of this ball with $\col(A)$; 
	\item if $||y-y_{0}||_2<d$ and $\beta\in(0,2)$, then RRR converges to a fixed point linearly; if $\beta=1$ (that is, RRR coincides with Douglas-Rachford), then  only  one iteration is required.
\end{enumerate}
\end{prop}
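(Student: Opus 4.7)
The plan is to exploit the fact that on the ball $B(y_0,d):=\{y:\|y-y_0\|_2<d\}$ the nonlinear operator $P_\B$ is locally constant, which reduces $\fR$ to a quadratic function and trivializes both claims. The key observation: for $y\in B(y_0,d)$, the triangle inequality gives $|y[i]-y_0[i]|\leq\|y-y_0\|_2<d\leq|y_0[i]|$, so $\text{sign}(y[i])=\text{sign}(y_0[i])$ and $y[i]\neq 0$ for every $i$. Hence $P_\B(y)=b\odot\text{sign}(y_0)=:c$ is constant on the ball, and because $y_0$ corresponds to a solution, $c=P_A y_0$.

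For part (1), substituting $P_\B\equiv c$ into the gradient formula from Section~\ref{sec:DR_GD} gives the affine map $\nabla\fR(y)=P_A y+c-2P_A c$ on $B(y_0,d)$, so the Hessian there is the constant symmetric matrix $P_A$. Being an orthogonal projection, $P_A\succeq 0$, which proves convexity on the ball. For the second claim, restricted to $\col(A)\cap B(y_0,d)$ vectors $u-v$ lie in $\col(A)$, so $P_A(u-v)=u-v$, giving $(\nabla\fR(u)-\nabla\fR(v))^T(u-v)=\|u-v\|_2^2$, i.e., 1-strong convexity.

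For part (2), subtracting $\nabla\fR(y_0)=0$ from the RRR update yields the linear error recursion $e_{t+1}=(I-\beta P_A)\,e_t$ with $e_t:=y_t-y_0$. Splitting $e_t=e_t^{\parallel}+e_t^{\perp}$ into its $\col(A)$ and $\col(A)^{\perp}$ components produces $e_{t+1}^{\parallel}=(1-\beta)e_t^{\parallel}$ and $e_{t+1}^{\perp}=e_t^{\perp}$. Thus $\|e_t\|_2^2=(1-\beta)^{2t}\|e_0^{\parallel}\|_2^2+\|e_0^{\perp}\|_2^2\leq\|e_0\|_2^2<d^2$ for $\beta\in(0,2)$, so the iterates never leave $B(y_0,d)$ and the gradient formula remains valid throughout. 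The iterates converge geometrically with rate $|1-\beta|$ to $y^{\star}:=y_0+e_0^{\perp}$; when $\beta=1$ the contraction is immediate and $y_1=y^{\star}$. Verifying $y^{\star}$ is an actual solution is the closing bookkeeping step: $P_A y^{\star}=P_A y_0=c$ and $P_\B(y^{\star})=c$ (since $y^{\star}\in B(y_0,d)$), so $P_A y^{\star}=P_\B(y^{\star})$ as required by Definition~\ref{defn:sol}.

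The main conceptual point---and the only subtlety beyond algebra---is that the Hessian $P_A$ has a null direction along $\col(A)^{\perp}$, which is why the error's $\col(A)^{\perp}$-component is conserved by the iteration. This explains both why strong convexity fails off $\col(A)$ and why RRR converges to a nearby fixed point $y^{\star}$ rather than returning to $y_0$; it also matches the characterization in Proposition~\ref{lem:solns_char}, where each solution has an unrestricted (up to sign) component in $\col(A)^{\perp}$.
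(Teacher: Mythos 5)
Your proof is correct and follows essentially the same route as the paper's: both establish that $P_\B$ is constant on the ball $\|y-y_0\|_2<d$ via sign preservation, reduce $\fR$ to a quadratic with Hessian $P_A=AA^{\dagger}$, and analyze the resulting linear recursion by splitting the error into its $\col(A)$ and $\col(A)^{\perp}$ components. The only cosmetic differences are that you phrase the iteration as an error recursion $e_{t+1}=(I-\beta P_A)e_t$ rather than writing the closed form of $y^{(t)}$, and you verify directly that the limit corresponds to a solution instead of invoking Corollary~\ref{lem:any_fixed[i]s_good}.
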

A similar result was shown for bit retrieval in~\cite[Section~VII]{Elser2018}.
\paragraph{Stability.} 
Next, we show a stability result: if the norm of the gradient of $\fR$ is sufficiently small then there is a solution nearby. 
\begin{prop}
	\label{lem:stable} There exists $\epsilon>0$, such that if $||\nabla \fR(y)||_2< \epsilon$ then:
	\begin{enumerate}
		\item $P_{\B}(y)\in\col(A)\cap\B$ is a solution;
		\item if $d=\min_i|(Ax_0)[i]|>0$, then there exists a point $y_0\in\RR^m$ that corresponds to a solution such that $||y-y_0||_2<\epsilon\left(1 + \frac{||P_A^c(y)||_2}{d}\right)$; 
		\item if in addition $\min_i|y[i]|\geq\epsilon$, then $||y-y_0||_2< \epsilon$.
	\end{enumerate}
\end{prop}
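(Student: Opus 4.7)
My plan starts from the orthogonal decomposition $\nabla \fR(y) = \bigl(P_A(y) - P_A P_\B(y)\bigr) + \bigl(P_\B(y) - P_A P_\B(y)\bigr)$, whose first summand lies in $\col(A)$ and whose second lies in $\col(A)^\perp$. Hence $\|\nabla \fR(y)\|_2 < \epsilon$ forces both $\|P_A(y) - P_A P_\B(y)\|_2 < \epsilon$ and $\|P_A^c P_\B(y)\|_2 < \epsilon$ separately. For part~(1), observe that $P_\B(y) = b\odot\sign(y)$ takes values in the finite set $\{b\odot s : s\in\{\pm 1\}^m\}$, so $\|P_A^c P_\B(y)\|_2$ can attain only finitely many values. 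I would choose $\epsilon$ strictly smaller than the minimum positive value in that finite set, so that $\|P_A^c P_\B(y)\|_2 < \epsilon$ forces $P_A^c P_\B(y) = 0$, i.e.\ $P_\B(y) \in \col(A)\cap\B$. This also upgrades the first inequality to $\|P_A(y) - P_\B(y)\|_2 < \epsilon$, which is the key estimate downstream.

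For part~(3), set $\tilde y_0 := P_\B(y)$ and $y_0 := \tilde y_0 + P_A^c(y)$. Since the parallel and perpendicular components split, $\|y - y_0\|_2 = \|P_A(y) - P_\B(y)\|_2 < \epsilon$. It remains to verify the sign conditions of Proposition~\ref{lem:solns_char} with $w := P_A^c(y)$. A violation at an index $i$ with, say, $y[i]>0$ would force $P_A^c(y)[i] \leq -b[i]$, i.e.\ $P_A(y)[i] \geq y[i] + b[i]$; combined with $|P_A(y)[i] - b[i]| \leq \|P_A(y) - P_\B(y)\|_2 < \epsilon$ this yields $y[i] < \epsilon$, contradicting $\min_i|y[i]| \geq \epsilon$. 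Hence $y_0$ corresponds to a solution.

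For part~(2), I would keep $\tilde y_0 = P_\B(y)$ and replace $w = P_A^c(y)$ by a nearby $w^* \in \col(A)^\perp$ satisfying the sign conditions $w^*[i]\,\sign(y[i]) > -b[i]$ coordinate-wise. A first attempt is the uniform scaling $w^* := \alpha P_A^c(y)$, where $\alpha\in(0,1]$ is the largest value for which all sign conditions hold. The argument of part~(3) shows the bad set $S_{\mathrm{bad}} := \{i : P_A^c(y)[i]\,\sign(y[i]) \leq -b[i]\}$ can only contain indices with $|y[i]| < \epsilon$, and on $S_{\mathrm{bad}}$ one has $|P_A^c(y)[i]|\geq b[i] \geq d$, so $|S_{\mathrm{bad}}|\cdot d^2 \leq \|P_A^c(y)\|_2^2$. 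This gives $\|y - y_0\|_2 \leq \|P_A(y) - P_\B(y)\|_2 + (1-\alpha)\|P_A^c(y)\|_2 < \epsilon + (1-\alpha)\|P_A^c(y)\|_2$, so the target bound reduces to showing that one can arrange $\alpha \geq 1 - \epsilon/d$.

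The main obstacle is exactly this last inequality: a uniform scaling can be forced to take $\alpha$ as small as $d/\|P_A^c(y)\|_\infty$, which only yields the weaker bound $(1-\alpha)\|P_A^c(y)\|_2 \leq \|P_A^c(y)\|_2 - d$ and can be far from the target whenever a single coordinate of $P_A^c(y)$ is much larger than $d$. I would therefore expect the proof to use a non-uniform correction---clip only the offending coordinates and reproject the result back onto $\col(A)^\perp$---and to control the reprojection error using the sparsity of $S_{\mathrm{bad}}$ together with the discreteness already exploited in part~(1). Turning the clip-and-project step into the quantitative bound $\epsilon\bigl(1 + \|P_A^c(y)\|_2/d\bigr)$ is where the real work lies.
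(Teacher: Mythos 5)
Your parts (1) and (3) are correct and essentially reproduce the paper's argument: the orthogonal splitting of $\nabla\fR(y)$, the finiteness of the possible values of $\|P_A^cP_\B(y)\|_2$, and the observation that an index violating the sign conditions of Proposition~\ref{lem:solns_char} would force $|y[i]|<\epsilon$. The genuine gap is in part (2), and it is exactly where you stop: the uniform scaling you set up \emph{does} achieve $\alpha\geq 1-\epsilon/d$, and the clip--and--reproject detour you anticipate is both unnecessary and risky (reprojecting a clipped vector onto $\col(A)^{\perp}$ spreads the modification over all coordinates and can destroy the sign conditions at previously good indices).

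What you are missing is a two-sided bound on the offending coordinates. Write $y[i]=P_\B(y)[i]+w[i]+P_A^c(y)[i]$ with $w:=P_A(y)-P_\B(y)$, so that $|w[i]|\leq\|w\|_2<\epsilon$ once part (1) is in force. If $i\in S_{\mathrm{bad}}$, i.e.\ $P_A^c(y)[i]\,\sign(y[i])\leq -b[i]$, multiplying the identity by $\sign(y[i])$ gives $0\leq|y[i]|< b[i]+\epsilon-|P_A^c(y)[i]|$, hence
\[
b[i]\ \leq\ |P_A^c(y)[i]|\ <\ b[i]+\epsilon \qquad \text{for every } i\in S_{\mathrm{bad}}.
\]
So an offending coordinate exceeds $b[i]$ by less than $\epsilon$, never by a large factor; your worst case $\alpha=d/\|P_A^c(y)\|_\infty$ cannot occur, because coordinates of $P_A^c(y)$ much larger than $b[i]$ are never in $S_{\mathrm{bad}}$ and do not constrain $\alpha$ at all (shrinking a coordinate that already satisfies the sign condition keeps it satisfied). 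Now take $\alpha=1-\epsilon/d$: for $i\in S_{\mathrm{bad}}$,
\[
b[i]-\left(1-\frac{\epsilon}{d}\right)|P_A^c(y)[i]| \;=\; b[i]-|P_A^c(y)[i]|+\frac{\epsilon}{d}|P_A^c(y)[i]| \;>\; -\epsilon+\frac{\epsilon}{d}\cdot d \;=\;0,
\]
using both halves of the sandwich, so the sign condition holds at every index and
\[
\|y-y_0\|_2\;\leq\;\|w\|_2+\frac{\epsilon}{d}\|P_A^c(y)\|_2\;<\;\epsilon\left(1+\frac{\|P_A^c(y)\|_2}{d}\right),
\]
which is the claimed bound. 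This is precisely the paper's argument: its set $I$ is your $S_{\mathrm{bad}}$, its $\alpha$ is your $1-\alpha$, and its inequality~\eqref{eq:sandwich} is the sandwich above.
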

\noindent Note that (1) does not imply that $y$ corresponds to a solution (or equivalently, that $\nabla \fR(y)=0$), as shown in Section~\ref{sec:DR}. In addition, (2)  does not claim that all $y$ near $y_0$ will converge to $y_0$; this is true under the stronger assumptions of Proposition~\ref{lem:local_convex}.

We are trying to find a zero of both $\fR$ and $\nabla\fR$ by gradient descent, while $\fR$ itself can become negative. 
The next lemma shows that $\fR(y)$ becomes positive for large enough step size along almost any search direction from any point: 
\begin{lemma}
	\label{lem:no_escape} For any $y\in\RR^m$ and any direction $d\in\RR^m$, such that $d[i]\neq 0$ for all $i$ and either $P_A(d)\neq 0$ or $\langle d, P_{\B}(y)\rangle > 0$, we have $\lim_{\beta\to\infty}\fR(y-\beta d) = \infty$. 
\end{lemma}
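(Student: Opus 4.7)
The plan is to reduce to a regime in which $P_\B(y-\beta d)$ is a fixed vector independent of $\beta$, expand $\fR(y-\beta d)$ as a quadratic polynomial in $\beta$, and read off the leading behavior from the two hypotheses separately.

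First I would exploit the assumption $d[i]\neq 0$ to stabilize $P_\B$: for every $\beta > \max_i |y[i]|/|d[i]|$ one has $\sign(y[i]-\beta d[i]) = -\sign(d[i])$, so $P_\B(y-\beta d) = v$ with $v := -b\odot\sign(d)$ constant in $\beta$. Using linearity of $P_A$, the three terms in $\fR(y-\beta d)$ become the explicit quadratics $\|y-\beta d-P_A v\|_2^2$, $\|P_A^c y-\beta P_A^c d\|_2^2$, and $\|y-\beta d-v\|_2^2$.

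Next I would collect powers of $\beta$. Using the Pythagorean identity $\|d\|_2^2=\|P_A d\|_2^2+\|P_A^c d\|_2^2$, the $\beta^2$-coefficient of $\fR(y-\beta d)$ telescopes to $\tfrac{1}{2}\|P_A d\|_2^2$. The $\beta^1$-coefficient is the sum of three cross terms with appropriate signs; after using $\langle P_A^c d,P_A^c y\rangle=\langle d,y\rangle-\langle P_A d,y\rangle$, the $\langle d,y\rangle$ contributions cancel and the coefficient collapses to $c_1 = \langle P_A d,\,2v-y\rangle - \langle d,v\rangle$. Hence $\fR(y-\beta d) = \tfrac{1}{2}\|P_A d\|_2^2\,\beta^2 + c_1\beta + c_0$ for all sufficiently large $\beta$.

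Finally I would split into the two hypothesized cases. If $P_A d\neq 0$, the quadratic coefficient is strictly positive and the limit is $+\infty$ immediately. If $P_A d=0$, then the quadratic coefficient vanishes and $c_1$ reduces to $-\langle d,v\rangle = \langle d,\,b\odot\sign(d)\rangle = \sum_i b[i]|d[i]|$; since $d[i]\sign(y[i])\leq |d[i]|$ coordinatewise, $\langle d,P_\B(y)\rangle = \sum_i b[i]d[i]\sign(y[i]) \leq \sum_i b[i]|d[i]| = c_1$, so the hypothesis $\langle d,P_\B(y)\rangle>0$ forces $c_1>0$ and again $\fR(y-\beta d)\to+\infty$.

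The main obstacle is the routine but easy-to-miscount bookkeeping in the $c_1$ computation, especially the cancellation of the $\langle d,y\rangle$ terms between the three summands; the conceptual content is the stabilization of $P_\B$ along the ray and the Pythagorean collapse of the leading coefficient to $\tfrac12\|P_A d\|_2^2$.
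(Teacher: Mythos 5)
Your proof is correct and follows essentially the same route as the paper's: freeze $P_{\B}(y-\beta d)=-b\odot\sign(d)$ for $\beta>\max_i|y[i]/d[i]|$, expand $\fR(y-\beta d)$ as a quadratic in $\beta$ whose leading coefficient collapses to $\tfrac{1}{2}\|P_A(d)\|_2^2$, and fall back on the linear term when $P_A(d)=0$. One point in your favor: your value for the linear coefficient in the degenerate case, $-\langle d,v\rangle=\sum_i b[i]|d[i]|=\langle d,P_{\B}(d)\rangle$, is the correct one (the paper's displayed formula has $\langle d,P_{\B}(y)\rangle$ where $\langle d,P_{\B}(d)\rangle$ should appear), and your extra inequality $\langle d,P_{\B}(y)\rangle\le\sum_i b[i]|d[i]|$ is precisely the step needed to connect the computed coefficient to the stated hypothesis.
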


The following corollary states that if the gradient is non-vanishing, then it satisfies the conditions on the direction $d$ of Lemma~\ref{lem:no_escape}. In other words, the negative of the gradient is a good direction to follow. 
\begin{cor}
	\label{cor:RRR_no_escape} For any $y\in\RR^m$ such that $\nabla \fR(y)[i]\neq 0$ for any $i$, there exists a sufficiently large step size $\beta>0$ such that $\fR(z-\beta\nabla \fR(y)) > 0$. 
\end{cor}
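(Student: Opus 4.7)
The plan is to apply Lemma~\ref{lem:no_escape} with the search direction $d := \nabla \fR(y)$ (interpreting the `$z$' in the statement as a typo for $y$). The first hypothesis of that lemma---that $d[i]\neq 0$ for every $i$---is exactly what the corollary assumes, so the only work is to verify that either $P_A(d)\neq 0$ or $\langle d, P_{\B}(y)\rangle > 0$.

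First I would compute $P_A(\nabla \fR(y))$ explicitly using the formula $\nabla \fR(y) = P_A(y) + P_{\B}(y) - 2P_A P_{\B}(y)$ from Section~\ref{sec:DR_GD} together with the identity $P_A^2 = P_A$, obtaining
\begin{equation*}
P_A(\nabla \fR(y)) = P_A(y) - P_A P_{\B}(y).
\end{equation*}
If this vector is non-zero the first sufficient condition of Lemma~\ref{lem:no_escape} is met. Otherwise $P_A(y) = P_A P_{\B}(y)$, and I would substitute this into $\langle \nabla \fR(y), P_{\B}(y)\rangle$. Using self-adjointness of the orthogonal projector $P_A$ to rewrite $\langle P_A P_{\B}(y), P_{\B}(y)\rangle = \|P_A P_{\B}(y)\|_2^2$, the inner product collapses via a Pythagorean-type calculation to
\begin{equation*}
\langle \nabla \fR(y), P_{\B}(y)\rangle = \|P_A P_{\B}(y)\|_2^2 + \|P_{\B}(y)\|_2^2 - 2\|P_A P_{\B}(y)\|_2^2 = \|P_{\B}(y)\|_2^2 - \|P_A P_{\B}(y)\|_2^2 \geq 0.
\end{equation*}

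To finish, I would rule out the remaining degenerate case in which both $P_A(d)=0$ and $\|P_{\B}(y)\|_2 = \|P_A P_{\B}(y)\|_2$. The latter forces $P_{\B}(y)\in\col(A)$, whence $P_A P_{\B}(y) = P_{\B}(y)$; combined with $P_A(y) = P_A P_{\B}(y)$ it yields $P_A(y) = P_{\B}(y)$, and substituting back gives $\nabla \fR(y) = P_A(y) + P_{\B}(y) - 2P_A P_{\B}(y) = 0$, contradicting the hypothesis $\nabla \fR(y)[i]\neq 0$ for every $i$. Hence one of the two alternatives of Lemma~\ref{lem:no_escape} holds for $d = \nabla \fR(y)$, so $\fR(y - \beta\nabla \fR(y)) \to \infty$ as $\beta\to\infty$, and in particular any sufficiently large $\beta$ yields $\fR(y-\beta\nabla \fR(y)) > 0$.

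I do not anticipate a genuine obstacle: the corollary is essentially a direct instantiation of the lemma. The only point that requires mild care is arranging the case analysis so that the equality $P_A(y)=P_A P_{\B}(y)$ is substituted at the right spot to collapse the inner product to the clean difference $\|P_{\B}(y)\|_2^2 - \|P_A P_{\B}(y)\|_2^2$, and then recognizing that vanishing of that difference together with $P_A(d)=0$ forces $\nabla \fR(y)=0$, which the hypothesis precludes.
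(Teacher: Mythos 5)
Your proposal is correct and follows essentially the same route as the paper's proof: both compute $P_A(\nabla \fR(y)) = P_A(y) - P_AP_{\B}(y)$, and in the case $P_A(\nabla\fR(y))=0$ show that $\langle \nabla \fR(y), P_{\B}(y)\rangle = \|P_A^c P_{\B}(y)\|_2^2 \geq 0$ (your difference of norms is the same quantity by Pythagoras), with vanishing forcing $\nabla\fR(y)=0$, contradicting the hypothesis. The only cosmetic difference is that the paper writes $\nabla\fR(y)=P_A^cP_{\B}(y)$ directly in that case rather than expanding the inner product term by term.
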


%
%

\section{Discussion}

This work is part of ongoing efforts to explain the remarkable effectiveness of Douglas-Rachford type algorithms for phase retrieval, as well as other non-convex hard problems. 
In particular, we have shown that RRR can be viewed, in some cases, as gradient descent on a certain objective function. The solutions are critical points of that objective.
This relates Douglas-Rachford with the vast body of literature about gradient techniques for the random phase retrieval problem.
However, in contrast to the common practice in optimization, the objective function can take negative values and therefore a solution is not a minimizer of the objective.  

Using the objective function~\eqref{eq:fr}, we have derived new results that establish local convexity in the vicinity of a solution and show that the solutions are stable (in the sense of Proposition~\ref{lem:stable}). We hope to harness recent exciting results on first-order methods in different non-convex settings (see for instance~\cite{lee2016gradient,sun2016complete,boumal2016nonconvex,balakrishnan2017statistical,boumal2018deterministic,li2019rapid} , just to name a few) to extend these results and unveil the  global properties of RRR.
One particular goal is to understand the source and basic characteristics of the dynamical behavior of RRR far from any solution, as demonstrated in Figure~\ref{fig:example} and in~\cite{Elser2017}. 

\section{Proofs} \label{sec:proof}
%

\subsection{Proof of Proposition~\ref{lem:solns_char}}\label{sec:pf_solns_char}
Let $y = \tilde y+w$ with $\tilde y,w$ as hypothesized.  Then, $P_A(y)=\tilde y$ and $P_{\B}(y) = P_{\B}(\tilde y)=\tilde y$. The last equality holds because $\tilde y\in \text{col}(A)\cap B$ by hypothesis, while the first equality holds since either
$$\phase(\tilde y[i]+w[i]) = \phase((|(Ax_0)[i]|+|w[i]|)\phase(\tilde y[i]))=\phase(\tilde y[i]),$$ or $$\phase(\tilde y[i]+w[i]) = \phase((|(Ax_0)[i]|-|w[i]|)\phase(\tilde y[i]))=\phase(\tilde y[i]).$$ Therefore, $P_A(y)=P_{\B}(y)$.

Conversely, if $y$ corresponds to a solution, then we can write $y = P_A(y) + P_{A^c}(y)$. Since $P_{\B}(y) = P_A(y) = P_{\B}P_A(y)$,  we have  $$\phase(P_A(y)[i] + P_{A^c}(y)[i]) = \phase(P_A(y)[i]).$$ Therefore, either $P_{A^c}(y)[i]=0$ or $\phase(P_{A^c}(y)[i])=\phase(P_A(y)[i])$ or $\phase(P_{A^c}(y)[i])=-\phase(P_A(y)[i])$ and $|P_{A^c}(y)[i]|<|P_{A}(y)[i]|$, as desired.

\subsection{Proof of Proposition~\ref{lem:local_convex}}\label{sec:pf_local_convex}
Note that if $\text{sign}(y[i])\neq \text{sign}({y_{0}}[i])$ for some $i_0$, then
\[ ||y-y_{0}||_2^2 = \sum_i|y[i]-{y_{0}}[i]|^2 \geq (|y[i_0]| + |{y_{0}}[i_0]|)^2 \geq |{y_{0}}[i_0]|^2.\]
Therefore,  if $||y-y_{0}||_2<d$ we must have $\text{sign}(y[i])=\text{sign}({y_{0}}[i])$ for all $i$ and hence $$P_{\B}(y)=P_{\B}(y_{0})=P_A(y_{0}).$$ 
Therefore, in this $\ell_2$ ball the objective function simplifies to
\[\fR(y) = \frac{1}{2}\left(||y-P_A(y_{0})||_2^2 - ||(I-P_A)(y)||_2^2\right),\]
so $f(y)$ is infinitely differentiable. 
Then, we have 
\begin{equation} \label{eq:lemma_convex_grad}
\nabla \fR(y) = P_A(y-y_{0}),
\end{equation}
and $\nabla^2 \fR(y) = AA^{\dagger}\succeq 0$, so $\fR(y)$ is convex. Furthermore, when restricted to $\col(A)$ all the eigenvalues of $AA^{\dagger}$ are 1 as it is a projection matrix onto $\col(A)$, so $\fR(y)|_{\col(A)}$ is 1-strongly convex.
This concludes the proof of the first part.

Next, let us assume that  $||y-y_{0}||_2<d$ and $\beta\in(0,2)$. 
According to~\eqref{eq:lemma_convex_grad}, the $t^{\text{th}}$ RRR iteration reads 
\begin{equation*}
\begin{split}
y^{(t)} &= y^{(t-1)}-\beta P_A(y^{(t-1)}-y_{0}) \\&= (1-\beta)P_A(y^{(t-1)}) + \beta P_A(y_{0}) + P_A^c(y^{(t-1)}).
\end{split}
\end{equation*}
Therefore, 
\begin{align*}
y^{(t)}-y_0 &= (1-\beta)P_A(y^{(t-1)}) + \beta P_A(y_{0}) + P_A^c(y^{(t-1)})-P_A(y_0)-P_A^c(y_0) \\
&=(1-\beta)P_A(y^{(t-1)}-y_0) + P_A^c(y^{(t-1)}-y_0),
\end{align*}
and
\[\begin{aligned} ||y^{(t)}-y_{0}||_2^2 &= (1-\beta)^2||P_A(y^{(t-1)}-y_{0})||_2^2 + ||P_A^c(y^{(t-1)}-y_{0})||_2^2\\
&< ||P_A(y^{(t-1)}-y_{0})||_2^2 + ||P_A^c(y^{(t-1)}-y_{0})||_2^2\\
&= ||y^{(t-1)}-y_{0}||_2^2\\
&<d.\end{aligned}\]
This implies that if we initialize $y^{(0)}$ such that $||y^{(0)}-y_{0}||_2<d$, and use a constant step size $\beta\in(0,2)$, then $||y^{(t)}-y_0||_2<d$ for all $t\geq 1$ so the RRR iterations stay within this ball, and   
\begin{equation*} 
y^{(t)} = (1-\beta)^tP_A(y^{(0)}) + \left(1-(1-\beta)^t\right)P_A(y_0) + P_A^c(y^{(0)}),
\end{equation*} 
so $$y_{\infty}=\lim_{t\to\infty}y^{(t)} = P_A(y_0) + P_A^c(y^{(0)}) = P_{\B}(y_0) + P_A^c(y^{(0)}).$$ Note that $y_{\infty}$ corresponds to a solution by Corollary~\ref{lem:any_fixed[i]s_good} and the fact that $\nabla \fR(y_{\infty}) = 0$. Also note that if $\beta = 1$, then $y^{(1)}=y_{\infty}$ so RRR converges to $y_{\infty}$ in one iteration.

\subsection{Proof of Proposition~\ref{lem:stable}}
\label{sec:pf_stable}
Note that since
\begin{equation*}
\begin{split}
\nabla \fR(y) &= P_A(y) + P_\B(y) - 2P_A P_\B(y) \\
&= P_A(I-P_\B)(y) + (I-P_A)P_B(y),
\end{split}
\end{equation*}
then
\[||\nabla \fR(y)||_2^2 =  ||P_A(y) - P_AP_{\B}(y)||_2^2 + ||P_{\B}(y) - P_AP_{\B}(y)||_2^2.\]
Therefore, $||P_{\B}(y) - P_AP_{\B}(y)||_2\leq ||\nabla f(y)||_2$ and $||P_A(y) - P_AP_{\B}(y)||_2\leq ||\nabla \fR(y)||_2$. Then note that $||P_{\B}(y) - P_AP_{\B}(y)||_2$ depends only on the signs of $y$, and hence takes at most $2^m$ values, one of which is zero. Therefore, there exists $\epsilon_1$ such that if $||P_{\B}(y) - P_AP_{\B}(y)||_2<\epsilon_1$ then in fact $P_{\B}(y) = P_AP_{\B}(y)$ and so $P_{\B}(y)\in\col(A)\cap\B$ is a solution, which proves the first claim. In this regime $$\nabla \fR(y)=P_A(I-P_\B)(y).$$
Taking $\epsilon\leq \epsilon_1$, we then have
\begin{equation} \label{eq:pa_pab}
||P_A(y) - P_AP_{\B}(y)||_2^2 <\epsilon.
\end{equation}

Let $y_0 = P_{\B}(y) + (1-\alpha)P_A^c(y)$ 
for $\alpha\in(0,1)$. We claim that there exists $\alpha$ depending on $\epsilon$ such that $y_0$ corresponds to a solution. First, we claim that if $\min_i|(Ax_0)[i]|>0$ then $P_{\B}(y) = P_{\B}P_A(y) = P_AP_{\B}(y)$, in which case
\begin{equation} \label{eq:inB}
P_A(y_0)=P_A P_{\B}(y)=P_{\B}(y)\in\B.
\end{equation}

For general vectors $u,v\in\RR^m$, note that if $\text{sign}(u[i_0]+v[i_0])\neq \text{sign}(u[i_0])$ for some $1\leq {i_0}\leq m$, then
\[\begin{aligned} ||u - P_{\B}(u+v)||_2^2 &= \sum_{i=1}^m\Big||u[i]|\text{sign}(u[i]) - |(Ax_0)[i]|\text{sign}(u[i]+v[i])\Big|^2 \\& \geq \Big| |u[i_0]| + |(Ax_0)[i_0]|\Big|^2 \\&\geq |(Ax_0)[i_0]|^2.\end{aligned}\]
Hence, if $\min_i|(Ax_0)[i]|>0$ and we choose $$\epsilon<\min(\epsilon_1, |(Ax_0)[1]|,\ldots,|(Ax_0)[m]|)$$
and substituting $u=P_A(y)$ and $v=P_{A^C}(y)$ we have
\begin{equation*}
||P_A(y)-P_{\B}(P_A(y)+P_{A^C}(y))||_2 = ||P_A(y)-P_{\B}(y)||_2 >\epsilon.
\end{equation*}
Since   $P_{\B}(y)=P_{A}P_{\B}(y)$, 
it contradicts~\eqref{eq:pa_pab} and therefore we must have  
$$\text{sign}(y[i])=\text{sign}(P_A(y)[i] + P_A^c(y)[i]) = \text{sign}(P_A(y)[i]),$$ for all $i$. Consequently, $P_{\B}(y) = P_{\B}P_A(y) = P_AP_{\B}(y)$ which is the desired claim.

Next, we wish to show that $P_{\B}(y_0)=P_{\B}(y)$, or equivalently, $$\text{sign}(P_{\B}(y) + (1-\alpha)P_A^c(y))=\text{sign}(y-w-\alpha P_A^c(y))=\text{sign}(y),$$ where $w = P_A(y) - P_{\B}(y)$; note that $||w||_2 < \epsilon$ and so also $|w[i]|< \epsilon$ for all $i$. 
 Let us define the set
\[\begin{aligned}I &= \left\{i\in\{1,\ldots, m\}:\ \text{sign}(P_A^c(y)[i])\neq \text{sign}(y[i]), \text{sign}(w[i]) = \text{sign}(y[i]),\ |P_A^c(y)[i]|\geq |(Ax_0)[i]|\right\}.\end{aligned}\]
\begin{lemma} \label{lem:not_in_i}
If $i\notin I$, then $\text{sign}(y_0[i]) = \text{sign}(y[i])$.
\end{lemma}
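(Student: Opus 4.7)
The plan is to simplify $y_0[i]$ by substituting $P_{\B}(y)[i]=|(Ax_0)[i]|\,\text{sign}(y[i])$, which follows from the preceding identity $P_{\B}(y)=P_{\B}P_A(y)$ together with the already-established fact that $\text{sign}(P_A(y)[i])=\text{sign}(y[i])$. This gives
\[
y_0[i] = |(Ax_0)[i]|\,\text{sign}(y[i]) + (1-\alpha)\,P_A^c(y)[i].
\]
Multiplying by $\text{sign}(y[i])$, the desired conclusion $\text{sign}(y_0[i])=\text{sign}(y[i])$ is equivalent to the strict inequality
\[
|(Ax_0)[i]| + (1-\alpha)\,\text{sign}(y[i])\,P_A^c(y)[i] \;>\; 0,
\]
so the task reduces to showing this positivity whenever at least one of the three defining conjuncts of $I$ is violated at $i$.

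Next I would do a case split. If the first conjunct fails, then $\text{sign}(P_A^c(y)[i])=\text{sign}(y[i])$, whence both summands above are nonnegative and the second is strictly positive; so positivity holds at once. If the third conjunct fails, then $|P_A^c(y)[i]|<|(Ax_0)[i]|$, and since $1-\alpha\in(0,1)$ the absolute value of the second summand is strictly dominated by the first, regardless of its sign.

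The main obstacle is the remaining sub-case, in which only the second conjunct fails while the first and third still hold. Here $\text{sign}(w[i])\neq\text{sign}(y[i])$, $\text{sign}(P_A^c(y)[i])\neq\text{sign}(y[i])$, and $|P_A^c(y)[i]|\geq|(Ax_0)[i]|$, so that both $w[i]\,\text{sign}(y[i])\leq 0$ and $P_A^c(y)[i]\,\text{sign}(y[i])\leq 0$. The plan here is to exploit the underlying decomposition $y[i]=|(Ax_0)[i]|\,\text{sign}(y[i]) + w[i] + P_A^c(y)[i]$: multiplying by $\text{sign}(y[i])$ yields
\[
|y[i]| \;=\; |(Ax_0)[i]| + w[i]\,\text{sign}(y[i]) + P_A^c(y)[i]\,\text{sign}(y[i]),
\]
and dropping the nonpositive $w$-term gives $|y[i]| + |P_A^c(y)[i]|\,\mathbf{1}_{\{P_A^c(y)[i]\neq 0\}} \leq |(Ax_0)[i]|$, hence $|(Ax_0)[i]| \geq |P_A^c(y)[i]|$. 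Since $1-\alpha<1$, this produces the required strict inequality.

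The remaining care is bookkeeping for degenerate sub-cases: when $P_A^c(y)[i]=0$ the claim is immediate from the formula for $y_0[i]$, and when $w[i]=0$ or $(Ax_0)[i]=0$ one checks consistency with the hypothesis $\min_i|(Ax_0)[i]|>0$ used earlier (which in particular rules out $(Ax_0)[i]=0$ combined with $|P_A^c(y)[i]|\geq|(Ax_0)[i]|$ and $P_A^c(y)[i]\neq 0$ being a genuine obstruction). I expect these to fold in cleanly without modifying the three-case skeleton above.
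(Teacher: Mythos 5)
Your proof is correct and follows essentially the same route as the paper: a case analysis on which of the three defining conditions of $I$ fails at index $i$, using $P_{\B}(y)[i]=|(Ax_0)[i]|\,\text{sign}(y[i])$ to read off the sign of $y_0[i]$. The only cosmetic difference is in the ``only the second conjunct fails'' sub-case, where you first extract $|(Ax_0)[i]|\ge |y[i]|+|P_A^c(y)[i]|$ from the decomposition of $y[i]$ before concluding, whereas the paper writes $y_0[i]=y[i]-w[i]-\alpha P_A^c(y)[i]$ and observes directly that every term carries the sign of $y[i]$; both arguments close that case.
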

\begin{proof}
Note that if $i\notin I$ then either:
\begin{itemize}
	\item $\text{sign}(P_A^c(y)[i])=\text{sign}(y[i])$: in which case
	\[\begin{aligned} \text{sign}(y_0[i])&=\text{sign}(P_{\B}(y)[i]+(1-\alpha)P_A^c(y)[i])\\& =\text{sign}\left[(|(Ax_0)[i]|+(1-\alpha)|P_A^c(y)[i]|)\text{sign}(y[i])\right]\\&=\text{sign}(y[i]); \end{aligned}\]
	
	\item $\text{sign}(P_A^c(y)[i])=-\text{sign}(y[i])$ or $P_A^c(y)[i]=0$, and $\text{sign}(w[i]) = -\text{sign}(y[i])$: in which case
	\[ \begin{aligned}
	\text{sign}(y_0[i])&=\text{sign}(y[i]-w[i]-\alpha P_A^c(y)[i])\\&=\text{sign}\left[(|y[i]|+|w[i]|-\alpha|P_A^c(y)|)\text{sign}(y[i])\right]\\&=\text{sign}(y[i]),
	\end{aligned}\]
	as $|y[i]|\geq |P_A^c(y)[i]| > \alpha|P_A^c(y)[i]|$ if $P_A^c(y)[i]\neq 0$;
	
	\item $|P_A^c(y)[i]|<|(Ax_0)[i]|$: in which case
	\[\begin{aligned}
	\text{sign}(P_{\B}(y)[i]+(1-\alpha)P_{A}^c(y)[i]) &= \text{sign}[(|(Ax_0)[i]|\pm (1-\alpha)|P_{A^c}(y)|)\text{sign}(y[i])] \\&= \text{sign}(y[i]).
	\end{aligned}\]
\end{itemize}
\end{proof}

Next, note that  $\text{sign}(y[i]) = \text{sign}(P_{\B}(y)[i]+w[i]+P_A^c(y)[i])$.
For $i\in I$, we get $$\text{sign}(y[i]) = \text{sign}\left[(|(Ax_0)[i]| + |w[i]| - |P_A^c(y)[i]|)\text{sign}(y[i])\right];$$ so $$|P_A^c(y)[i]|<|(Ax_0)[i]| + |w[i]|< |(Ax_0)[i]| + \epsilon,$$ and hence
\begin{equation} \label{eq:sandwich}
|(Ax_0)[i]|\leq |P_A^c(y)[i]|<|(Ax_0)[i]| + \epsilon,\quad \forall i\in I.
\end{equation}
Let $d = \min_i|(Ax_0)[i] > 0$ and define $\alpha = \epsilon/d<1$. Then, $y_0 = P_{\B}(y) + \left(1-\frac{\epsilon}{d}\right)P_A^c(y)$. Note that if $i\in I$ then
\[\begin{aligned}
\text{sign}(y_0[i])&=\text{sign}\left(P_{\B}(y)[i] + \left(1-\frac{\epsilon}{d}\right)P_A^c(y)[i]\right) \\&= \text{sign}\left((|(Ax_0)[i]|-\left(1-\frac{\epsilon}{d}\right)|P_A^c(y)[i]|)\text{sign}(y[i])\right) \\&= \text{sign}(y[i]),
\end{aligned}\]
as from~\eqref{eq:sandwich}
\[\begin{aligned}
|(Ax_0)[i]|-|P_A^c(y)[i]|+\epsilon(|P_A^c(y)[i]|/d)&>\epsilon\left(|P_A^c(y)[i]/d - 1\right)  \geq 0.
\end{aligned}\]
Together with Lemma~\ref{lem:not_in_i}, we conclude that $\text{sign}(y_0[i])=\text{sign}(y[i])$ for all $i$.
Therefore, with~\eqref{eq:inB} we have $$P_{\B}(y_0)=P_{\B}(y) = P_A(y_0),$$ so $y_0$ corresponds to a solution. 
In addition, 
\begin{align*}
||y-y_0||_2 &\leq ||w||_2 + \frac{\epsilon}{d}||P_A^c(y)||_2 \\& < \epsilon + \frac{\epsilon}{d}||P_A^c(y)||_2;
\end{align*}
this completes the second part of the proof. 

If $\min_i|y[i]|\geq\epsilon$, we must have $I=\emptyset$, as if $i\in I$ then $$|y[i]| = |P_{\B}(y)[i]+w[i]+P_A^c(y)[i]|\leq  |w[i]| + |P_A^c(y)[i]|-|(Ax_0)[i]|\leq |w[i]|<\epsilon,$$ a contradiction. In that case we may set $\alpha=0$ in the above and conclude that $y_0 = P_{\B}(y) + P_A^c(y)$ corresponds to a solution and $||y-y_0||_2 = ||w||_2<\epsilon$.

\subsection{Proof of Lemma~\ref{lem:no_escape}}\label{sec:pf_no_escape}
For $\beta>\max_i\left|\frac{y[i]}{d[i]}\right|$, we have $P_{\B}(y-\beta d) = P_{\B}(-\beta d) = -P_{\B}(d)$. Then,
\[\begin{aligned} ||(y-\beta d) - P_AP_{\B}(y-\beta d)||_2^2 &= ||y-\beta d + P_AP_{\B}(d)||_2^2\\
&= \beta^2||d||_2^2 + ||y + P_AP_{\B}(d)||_2^2 - 2\beta\langle d, y + P_AP_{\B}(d)\rangle,\end{aligned}\]
where the second term is independent of $\beta$. 
Similarly, since $P_A$ is linear,
\[\begin{aligned} ||(y-\beta d) - P_A(y-\beta d)||_2^2 &= ||y-\beta d - P_A(y) + \beta P_A(d)||_2^2\\
&= \beta^2||(I-AA^{\dagger})d||_2^2 + ||y-P_A(y)||_2^2 - 2\beta \langle d, y-P_A(y)\rangle.\end{aligned}\]
In addition,
\[\begin{aligned} ||y-\beta d - P_{\B}(y-\beta d)||_2^2 &= ||y-\beta d + P_{\B}(d)||_2^2\\
&= \beta^2||d||_2^2 + ||y+P_{\B}(d)||_2^2 - 2\beta\langle d, y+P_{\B}(d)\rangle.\end{aligned}\]
Putting everything together:
\[\begin{aligned}
\fR(y-\beta d) &= \frac{1}{2}\beta^2 \|P_A(d)\|_2^2 - \beta\langle d, P_A(y + 2P_{\B}(d)) - P_{\B}(y)\rangle + c,
\end{aligned}\]
where $$c = ||y + P_AP_{\B}(d)||_2^2 - \frac{1}{2}\left(||y-P_A(y)||_2^2 + ||y+P_{\B}(d)||_2^2\right),$$ is independent of $\beta$.
If $P_A(d)\neq 0$ then $$\lim_{\beta\to\infty}\fR(y-\beta d)=\infty.$$ If $P_A(d) = 0$ then $$\fR(y-\beta d) = \beta\langle d, P_{\B}(y)\rangle + c,$$ so if $\langle d,P_{\B}(y)\rangle > 0$ we again have $\lim_{\beta\to\infty}\fR(y-\beta d)=\infty$.

\subsection{Proof of Corollary~\ref{cor:RRR_no_escape}}\label{sec:pf_RRR_no_escape}
For RRR, we have $$d=\nabla \fR(y) = P_A(y) + P_{\B}(y) - 2P_AP_{\B}(y),$$ and thus
\begin{equation} \label{eq:cor_proof}
P_A(d) = P_A(y) - P_AP_{\B}(y).
\end{equation}
Therefore, $P_A(d) = 0$ implies $\nabla \fR(y) = P_A^cP_{\B}(y)$ and hence $$\langle \nabla \fR(y), P_{\B}(y)\rangle = ||P_A^cP_{\B}(y)||_2^2 \geq 0.$$ If $\langle \nabla \fR(y), P_{\B}(y)\rangle = 0$ then $P_{\B}(y) = P_AP_{\B}(y) = P_A(y)$, where the second equality follows from~\eqref{eq:cor_proof}. Therefore, $y\in\col(A)\cap\B$ is a solution and $\nabla \fR(y)=0$.
So, $\langle \nabla \fR(y), P_{\B}(y)\rangle = ||P_A^cP_{\B}(y)||_2^2$ is zero only if $y$ is already a solution; otherwise, it is positive and thus satisfies the conditions of Lemma~\ref{lem:no_escape}. 

\bibliographystyle{plain}

\end{document}